\documentclass[a4paper,english]{lipics-v2018}
\usepackage{cite,microtype,graphicx}

\nolinenumbers

\title{The Parameterized Complexity of Finding Point Sets with Hereditary Properties}
\author{David Eppstein}{Computer Science Department, University of California, Irvine, USA}{eppstein@uci.edu}{}{Supported in part by NSF grants  CCF-1618301 and CCF-1616248.}
\author{Daniel Lokshtanov}{Department of Informatics, University of Bergen, Norway}{dlokshtanov@gmail.com}{}{Supported by Pareto-Optimal Parameterized Algorithms, ERC Starting Grant 715744}
\authorrunning{D. Eppstein and D. Lokshtanov}
\Copyright{David Eppstein and Daniel Lokshtanov}
\subjclass{\ccsdesc[500]{Theory of computation~Design and analysis of algorithms}}
\keywords{parameterized complexity, fixed-parameter tractability, point set pattern matching, largest pattern-avoiding subset, order type}
\hideLIPIcs

\usepackage{aliascnt}
\makeatletter
\let\lemma\@undefined
\let\endlemma\@undefined
\makeatother
\theoremstyle{theorem}
\newaliascnt{lemma}{theorem}
\newtheorem{lemma}[lemma]{Lemma}
\aliascntresetthe{lemma}


\newcommand{\Wone}{\mathrm{W}[1]}

\begin{document}
\maketitle

\begin{abstract}
We consider problems where the input is a set of points in the plane and an integer $k$, and the task is to find a subset $S$ of the input points of size $k$ such that $S$ satisfies some property. We focus on properties that depend only on the order type of the points and are monotone under point removals. We show that not all such problems are fixed-parameter tractable parameterized by $k$, by exhibiting a property defined by three forbidden patterns for which finding a $k$-point subset with the property is $\Wone$-complete and
(assuming the exponential time hypothesis) cannot be solved in time $n^{o(k/\log k)}$. However, we show that problems of this type are fixed-parameter tractable for all properties that include all collinear point sets, properties that exclude at least one convex polygon, and properties defined by a single forbidden pattern.
\end{abstract}

\section{Introduction}
In this work, we study the parameterized complexity of finding subsets of planar point sets having hereditary properties, by analogy to past work on hereditary properties of graphs.

In graph theory, a hereditary properties of graphs is a property closed under induced subgraphs. These have long been a central organizing principle for such diverse topics as coloring, perfection, intersection graph theory, and the theories of claw-free and triangle-free graphs. Algorithmically, finding the largest induced subgraph with a given hereditary property is NP-hard for any nontrivial property. The decision version of this problem is NP-complete whenever in addition the property can be tested in polynomial time. Following the proof of this result by Lewis and Yannakakis, this became one of the fundamental problems listed as hard in Garey and Johnson's early survey of the theory of NP-completeness~\cite{GT21,LewYan-JCSS-80}. Hereditary properties generalize closure under subgraphs or under graph minors, and finding induced subgraphs from families with these stronger closure properties has also been studied. For instance, in graph drawing, it is of interest to find large planar induced subgraphs as a step towards drawing nonplanar graphs~\cite{BorEppZhu-JGAA-15}.

The parameterized complexity of finding $k$-vertex induced subgraphs with hereditary properties (parameterized by $k$) has been completely resolved by Khot and Raman~\cite{KhoRam-TCS-02}. If all cliques and all independent sets have the given property, then (by Ramsey's theorem) every $n$-vertex graph has an $\Omega(\log n)$-vertex induced subgraph with the property. Thus, the only graphs that might not have an induced subgraph of size $k$ with the property are those of bounded size, making the problem fixed-parameter tractable. If there exist a clique and an independent set that both do not have the property, then by the same reasoning all graphs with the property have bounded size, and finding the largest of these that exists within a given input is fixed-parameter tractable. In all remaining cases, as Khot and Raman show, the problem is $\Wone$-complete. For hereditary properties that can be characterized by finitely many forbidden induced subgraphs, the dual problem of removing $k$ vertices so that the remaining induced subgraph has the given property is always fixed-parameter tractable, as a hitting set problem in which we must choose $k$ vertices to hit all copies of the obstacles~\cite{PK,Abu-JCSS-10}. The parameterized complexity of problems of this type that do not have a finite number of obstacles, such as deletion to a planar graph ($k$-apex graph recognition)~\cite{Kaw-FOCS-09} or to a bipartite graph (odd cycle transversal)~\cite{KawRee-SODA-10,KriNar-IPL-13}, also remains of interest.

Point sets in the Euclidean plane can be described combinatorially by their \emph{order type}, the specification for each ordered triple of points of whether that triple forms the vertices of a triangle in clockwise order or in counterclockwise order, or whether the three points are collinear.
From this point of view, it is natural to define a hereditary property of point sets to be a Boolean function of point sets that depends only on the order type, and that remains true for every subset of a point set having the property. Although this concept of hereditary properties has only been adopted very recently as a central organizing principle in discrete geometry~\cite{Epp-18}, it encompasses many famous and well-studied problems. For instance:
\begin{itemize}
\item The \emph{happy ending problem} concerns conditions that force some $k$-point subset to be in convex position~\cite{ErdSze-CM-35,Suk-JAMS-17}.  The property of not containing a $k$-point convex subset is hereditary~\cite[Chapter 11]{Epp-18}. It can be tested in polynomial time for points in the plane~\cite{ChvKli-SEC-80,EdeGui-JCSS-89}.
However, it is NP-hard in three dimensions, and a related problem of finding the largest 3d $k$-point convex set that does not enclose any other input points is $\Wone$-hard~\cite{GiaKnaWer-ESA-13}.
\item The \emph{no-three-in-line problem} concerns the size of the largest general-position subset of an $n\times n$ grid~\cite{Rot-JLMS-51,HalJacSud-JCTA-75}. Here, ``general position'' means having no three collinear points.
The property of not containing a $k$-point general-position subset is hereditary~\cite[Chapter 9]{Epp-18}.
\item The \emph{Erd{\H{o}}s--Ulam problem} and \emph{Harborth's conjecture} both concern realizability with rational distances. The Erd{\H{o}}s--Ulam problem asks whether there exist dense sets with all distances rational~\cite{Tao-14}, while Harborth's conjecture is that every planar graph can be drawn with non-crossing straight edges of rational length~\cite{HarKemMol-EM-87}. The existence of a rational-distance point set with the same order type as the given point set is hereditary. If every general-position order type has such a realization then Harborth's conjecture follows, and if not then the Erd{\H{o}}s--Ulam problem has a negative answer~\cite[Section 13.5]{Epp-18}.
\end{itemize}

Every hereditary property can be defined by its \emph{forbidden patterns}, the point-minimal order types that do not have the property. The first author's recent book on these problems asked whether every problem
of finding $k$ points with a hereditary property, out of a larger finite set of points in the plane, 
for a property defined by finitely many forbidden patterns, is fixed-parameter tractable~\cite[Open Problem 7.6]{Epp-18}. In this paper, we answer this question negatively, by finding a property for which (under standard complexity-theoretic assumptions) no FPT algorithm exists. However, we show that several natural classes of properties have fixed-parameter tractable algorithms. Specifically, we show the following results:
\begin{itemize}
\item There exists a hereditary property $\Pi$, defined by finitely many forbidden patterns, such that finding a $k$-point subset with property $\Pi$ is $\Wone$-complete and (assuming the exponential time hypothesis)  cannot be solved in time $n^{o(k/\log k)})$.
\item Finding a $k$-point subset with a given hereditary property is fixed-parameter tractable whenever all collinear sets have the property.
\item Finding a $k$-point subset with a given hereditary property is fixed-parameter tractable whenever there exists a convex polygon that does not have the property.
\item Finding a $k$-point subset that avoids a single forbidden pattern is always fixed-parameter tractable.
\end{itemize}
As with the analogous graph problems, we observe that the dual problem of removing $k$ points to make the remaining points have a given property is fixed-parameter tractable whenever the property has finitely many forbidden patterns, by treating it as a hitting set problem for the copies of the patterns within the point set~\cite[Theorem 7.9]{Epp-18}.

Our algorithmic results apply to any model of point set input that allows us to determine the orientation of a triple of points in constant (or polynomial) time. For points given by Cartesian coordinates, these orientations can be determined from the sign of a quadratic polynomial of these coordinates. Not all order types can be realized with integer coordinates, and even when this is possible a realization may need very large coordinates~\cite{GooPolStu-STOC-89}, but our algorithms could instead use inputs that specify an $n\times n\times n$ three-dimensional array of orientations.
The point sets used for our hardness results and lower bounds can be constructed using explicit integer coordinates of polynomial magnitude.

\section{A hard property}


\begin{figure}[t]
\centering
\includegraphics[scale=0.4]{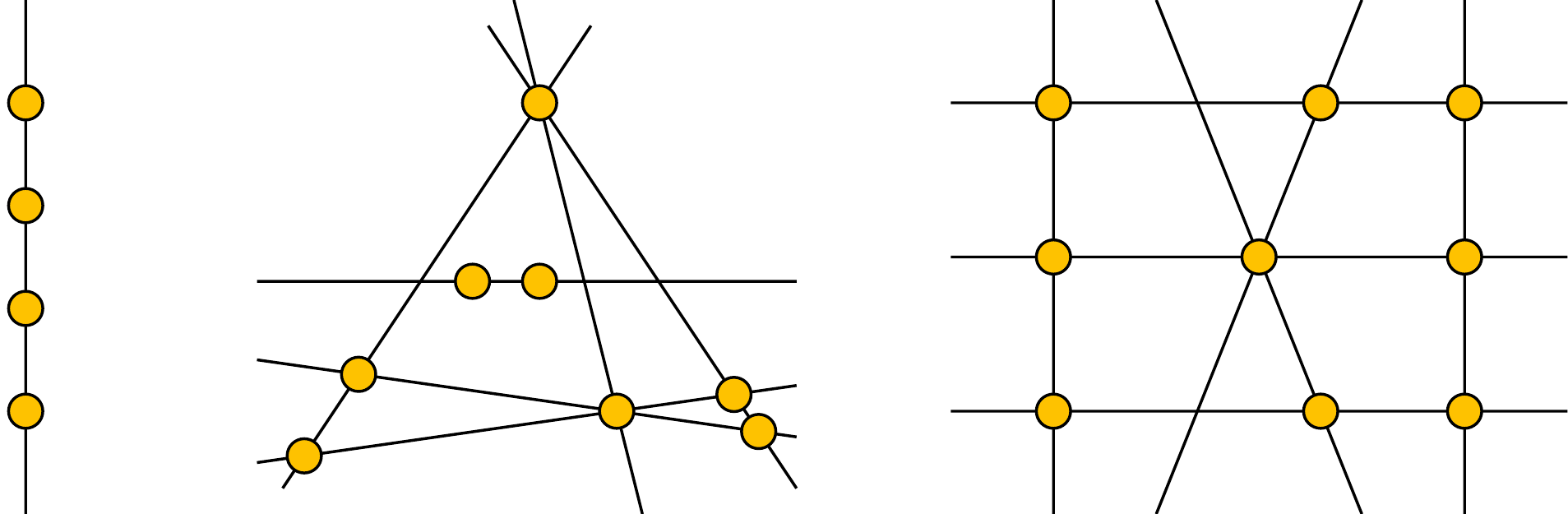}
\caption{The three forbidden patterns for a compliant point set.}
\label{fig:forbidden}
\end{figure}

The property for which we will prove finding a $k$-point subset hard is defined by the three forbidden patterns depicted in \autoref{fig:forbidden}: four points in a line (left), an eight-point pattern consisting of the six points and four three-point lines of a complete quadrilateral,\footnote{A \emph{complete quadrilateral} consists of four lines and six points, with each line containing three  points and each point contained in two lines. It can be constructed from four Euclidean lines, no two parallel and no three crossing in a single point, by including the six crossing points of pairs of lines. By using lines whose equations have rational coefficients and then clearing denominators from the point coordinates, it can be constructed using points with integer coordinates.} with two additional points enclosed by it (center), and a nine-point pattern resembling the three rows and columns of a tic-tac-toe board or $3\times 3$ grid, but with the central vertical line broken rather than straight (right). We call a set of points \emph{compliant} if it has no subset of points whose order type is the same as one of these three patterns.

The hard inputs to our problem will be point sets of a special form, which we call \emph{yards}.
A yard consists of points with the following properties:
\begin{itemize}
\item Most of its points are arranged onto horizontal lines of at least three points per line, which we call \emph{rows}. We call the leftmost and rightmost points of each row the \emph{guards} of the row (or more specifically the left and right guards, respectively). We call the remaining points, that are neither leftmost nor rightmost in their row, the \emph{inmates}.
\item Some triples of points from different rows form non-horizontal lines of three points, but there are no non-horizontal lines of four points.
\item Every line between two inmates of different rows passes completely above or completely below all left guards, and completely below or completely above (respectively) all right guards. Similarly, every line through two left guards or through two right guards passes completely above or completely below all inmates. (These points may determine vertical lines, which we consider to pass above and below all points.)
\item The only lines that contain three or more points and include both inmate and guard points are the rows.
\item No subset of nine guards forms the configuration on the right of \autoref{fig:forbidden}.
\item In addition to the points on rows, the yard has six additional points, forming a complete quadrilateral. We call these points the \emph{fence}.
\item The only lines of three or more points that include points of the fence are the four lines from which the fence was defined. No other six points of the yard form a complete quadrilateral.
\item The fence encloses the inmates but not the guards.  Every two inmates on a single row, plus the six points of the fence, form the middle forbidden pattern of \autoref{fig:forbidden}. Conversely, every instance of the middle forbidden pattern is formed by two inmates on a single row plus the fence.
\item No two inmates on different rows from each other form (with the fence) the middle configuration. That is, for each two inmates on different rows, the line through these two points passes through the fence in a combinatorially distinct way than the way that the rows pass through the fence.
\end{itemize}

An example of a yard is shown in \autoref{fig:yard}. For a yard with $r$ rows,
we define a \emph{lineup} to be a compliant subset of $k=3r+6$ points from the yard.

\begin{figure}[t]
\centering
\includegraphics[scale=0.4]{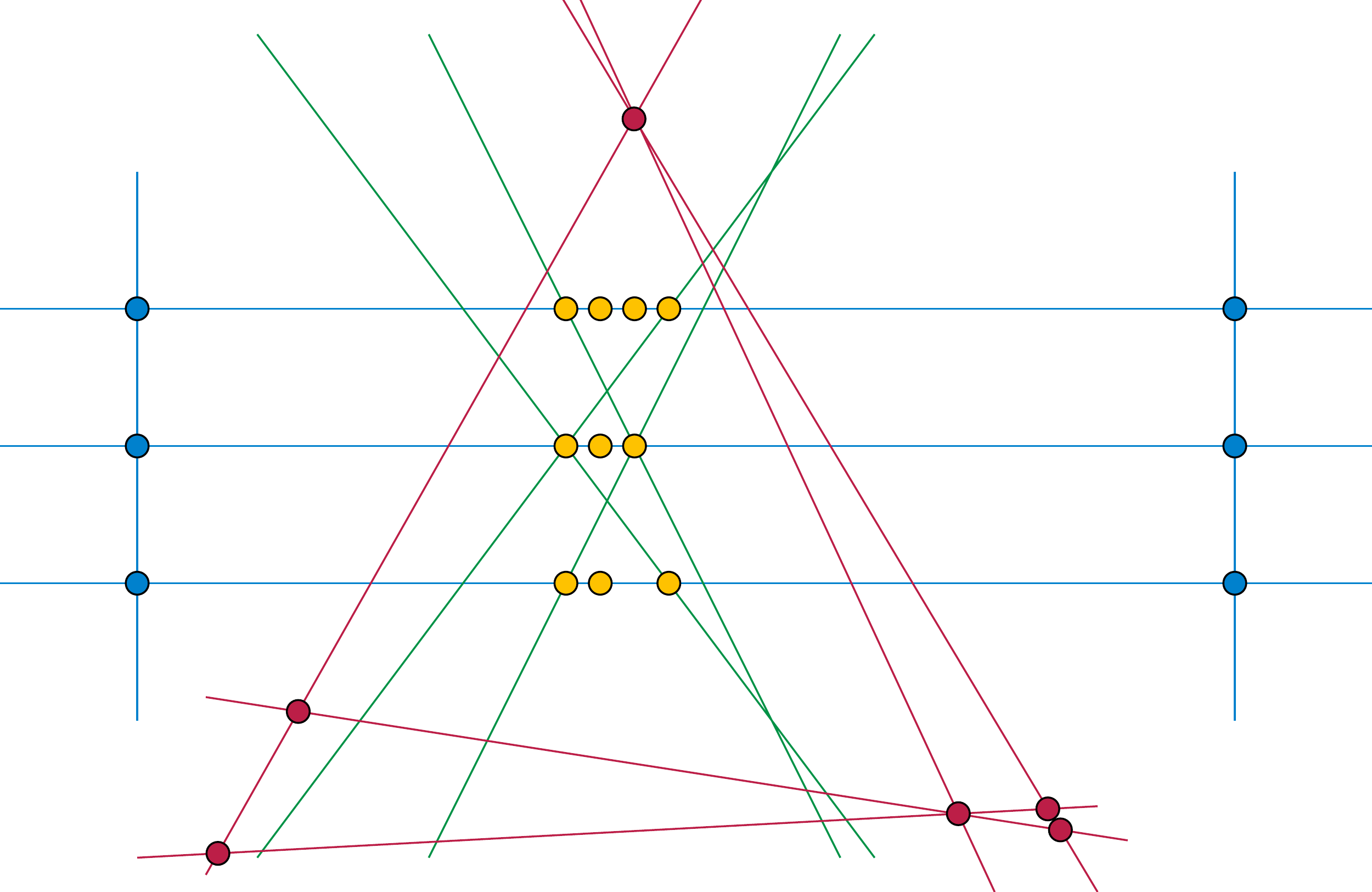}
\caption{A yard with three rows. The fence and its points are shown in red; the rows and their guards are blue, and the inmates on each row are yellow. The green lines are the most far from vertical among the lines through pairs of inmates that are not both on the same row as each other; they can be used to verify that all non-horizontal and non-vertical lines through pairs of inmates pass above or below the guards on each side, and that no pair of inmates on different rows from each other forms a forbidden pattern with the fence. By \autoref{lem:lineup}, every lineup in this yard consists of all of the blue and red points together with three collinear yellow points, one from each row. In this example all the left guards and all the right guards are collinear, but in larger examples only certain triples of guards would be collinear.}
\label{fig:yard}
\end{figure}

\begin{lemma}
\label{lem:lineup}
Every lineup must consist of all six points of the fence, and two guards and one inmate from each row. Whenever three rows have the property that their three left guards and their three right guards
form two collinear triples of points, the three inmates of the lineup from these rows must also be collinear.
\end{lemma}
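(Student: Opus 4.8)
The plan is to separate the statement into its two assertions: first, the structural claim that every lineup takes all six fence points together with both guards and exactly one inmate from each row; and second, the collinearity claim for inmates in rows whose guards form two collinear triples. The first assertion I would obtain from a counting argument driven by the first two forbidden patterns, and the second from matching the nine-point configuration of guards and inmates to the third forbidden pattern.

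For the structural claim, first I would bound how many points of each type a compliant set can contain. Since the four-collinear pattern is forbidden and every line with three or more points either is a row, is a non-horizontal line of exactly three points, or (if it meets the fence) is one of the four fence lines of exactly three points, the only lines long enough to threaten a four-point collinearity are the rows. Hence a compliant subset contains at most three points from each row and at most the six fence points, for a total of at most $3r+6=k$. As a lineup has exactly $k$ points, every one of these caps must be met with equality: the lineup contains all six fence points and exactly three points from each row. Next I would invoke the second forbidden pattern. Because every two inmates on a common row together with the six fence points form the middle pattern, a compliant set containing the whole fence can contain at most one inmate per row. Since each row has only two guards, three chosen points per row with at most one inmate forces exactly the two guards and one inmate, proving the first assertion.

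For the collinearity claim, fix three rows whose three left guards are collinear and whose three right guards are collinear, and let $a_i,a_j,a_l$ be the inmates the lineup takes from them. By the first assertion the lineup also contains all six of these guards, so it contains the nine points consisting of three horizontal rows, a collinear left column, and a collinear right column. I would argue that these nine points realize the order type of a tic-tac-toe grid with straight outer columns, whose middle column is straight exactly when $a_i,a_j,a_l$ are collinear and broken otherwise. In each row the inmate lies strictly between the two guards, so the left-to-right order within a row is guard, inmate, guard; the rows are stacked by height; and the defining yard property that every line through two inmates on different rows passes above all left guards and below all right guards (or vice versa) pins down the remaining cross orientations so that the inmates occupy the middle column. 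Consequently, if the three inmates were not collinear, the nine points would form the broken grid, which is exactly the third forbidden pattern, contradicting compliance; hence the inmates are collinear.

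The main obstacle is the order-type verification in the last step: I must check that the geometric constraints of the yard force the nine points into one of the two grid order types (straight or broken middle column) and rule out any other, compliant, order type, since only then does non-collinearity of the inmates entail the forbidden pattern rather than merely some arbitrary non-collinear configuration. This is where the constraint on lines through pairs of inmates does the real work, controlling the orientations of triples such as $(a_i,a_l,L_j)$ and $(a_i,a_l,R_j)$; some care is also needed with the reflection of the pattern, so that a middle column broken to either side still matches a forbidden order type.
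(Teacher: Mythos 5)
Your proof matches the paper's argument essentially step for step: the four-in-line pattern caps each row at three points so the count $3r+6$ forces all six fence points plus exactly three points per row, the quadrilateral-plus-two-inmates pattern then forces both guards and exactly one inmate per row, and three non-collinear inmates beneath two collinear guard triples would realize the right-hand forbidden pattern. The order-type verification you flag as the main obstacle is precisely what the yard axioms (lines through inmate pairs passing above/below all guards, guards flanking inmates on each row) are designed to discharge, and your ``reflection'' worry is resolved in the paper by noting that a middle column broken to either side yields configurations related by a $180^\circ$ rotation, which preserves the order type.
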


\begin{proof}
The forbidden four-point line forces the lineup to contain at most three points from each row, so if it contains $3r+6$ points it must include exactly three points from each row and all six points of the fence.
The forbidden pattern in the form of a complete quadrilateral plus two points forces the lineup to contain at most one inmate from each row, so in order for it to contain three points from each row it must contain both guards.

If the three left guards and the three right guards of a triple of rows are collinear (as they are in \autoref{fig:yard}), then the three inmates of the lineup from the same three rows must also be collinear. For otherwise, they would form the forbidden pattern on the right of \autoref{fig:forbidden}, either as shown or under a $180^\circ$ rotation (which causes no change to the order type of the configuration). Therefore, every such triple of inmates must be collinear.
\end{proof}

\autoref{lem:lineup} is not a necessary and sufficient condition (for instance, it is possible to have a subset of points that meets the condition of the lemma but contains a copy of the rightmost forbidden pattern, rotated by $90^\circ$) but we will see that it is necessary and sufficient for the yards constructed by the reduction that we describe next.


We prove hardness of finding lineups in yards via a parameterized reduction from the problem of \emph{partitioned subgraph isomorphism}. This problem's input consists of two graphs, a large \emph{host graph} $H$ and a smaller pattern graph $G$, whose sizes (numbers of edges) are $n$ and $k$ respectively. The vertices of both $G$ and $H$ are colored, with all vertices of $G$ having distinct colors. The problem is to find a subgraph of $H$ that is isomorphic to $G$ and matches it in coloring. Thus, the color classes of $H$ partition its vertices into subsets. We must choose one representative vertex from each subset so that whenever two vertices in $G$ are adjacent their chosen representatives in $H$ are adjacent as well. 
The color partition of the vertices in $H$ is what gives rise to the name, partitioned subgraph isomorphism. In our hardness reduction, we will consider a restricted case of this problem, \emph{cubic partitioned subgraph isomorphism}, in which we require the pattern graph $G$ to be a cubic (that is, 3-regular) graph. We make no such restriction on $H$. This problem is $\Wone$-complete and, assuming the exponential time hypothesis, cannot be solved in time $n^{o(k/\log k)}$.

Cubic partitioned subgraph isomorphism has been used for the same sort of hardness reduction previously~\cite{AmiKreMar-MFCS-16,BonMilRza-Algo-17}.
The $\Wone$-completeness of partitioned subgraph isomorphism problem appears to be folklore. Subgraph isomorphism includes as a special case the (uncolored) clique problem, which is $\Wone$-complete. It can be reduced to the colored problem by taking the tensor product of the host graph with a clique and coloring the vertices of the product by which copy of the host graph they come from.\footnote{In more detail, the tensor product $G\times H$ of two graphs $G$ and $H$ has as its vertices the pairs $(u,v)$ where $u$ is a vertex in $G$ and $v$ is a vertex in $H$. Two pairs $(u,v)$ and $(u',v')$ are adjacent in the tensor product if and only if either $u=u'$ and $v$ is adjacent to $v'$, or $v=v'$ and $u$ is adjacent to $u'$. If we properly color $K_k$ and assign each vertex $(u,v)$ in $G\times K_k$ the same color as $v$, then $G$ has a $k$-clique if and only if the partitioned subgraph isomorphism instance for $G\times K_k$ and $K_k$ (with these colors) has a solution. This reduction proves the $\Wone$-completeness of partitioned subgraph isomorphism problem from the known completeness of the clique problem.} In turn, partitioned subgraph isomorphism on graphs of minimum degree at least three can be reduced to the 3-regular case by expanding each high-degree vertex by a $3$-regular tree, as follows:
\begin{itemize}
\item Choose for each vertex $v$ of the pattern graph a tree $T_v$ having degree three at each internal vertex and having the neighbors of $v$ as its leaves. Let $T'_v$ be the subtree of $T_v$ consisting only of the interior nodes of $T_v$.
\item Form the disjoint union of the trees $T'_v$. For each edge $uv$ of the pattern graph add an edge to this union, connecting the node in $T'_u$ to which the leaf for $v$ is attached to the node in $T'_v$ to which the leaf for $u$ is attached.
\item For each vertex $w$ of the host graph, having the same color as a vertex $v$ of the pattern graph, make a tree $T'_w$ isomorphic to $T'_v$. Form a new host graph from the disjoint union of these trees $T'_w$.
\item For each pair of vertices $wx$ of the host graph such that $w$ has the same color as a vertex $u$ of the pattern graph, $x$ has the same color as a vertex $v$ of the pattern graph, $u$ and $v$ are adjacent, and $w$ and $x$ are adjacent, add an edge to this union, connecting the node in $T'_w$ corresponding to the one in $T'_u$ to which the leaf for $v$ would be attached with the node in $T'_x$ corresponding to the one in $T'_v$ to which the leaf for $u$ would be attached.
\end{itemize}
The ETH-based lower bounds on the time for cubic partitioned subgraph isomorphism come from Corollary 6.1 of Marx~\cite{Mar-ToC-10}, using the fact that when $G$ is a cubic expander graph its treewidth is $\Omega(k)$~\cite{GroMar-JCTB-09}.


To translate an instance of cubic partitioned subgraph isomorphism to the problem of finding lineups in yards, we construct a yard with a row for each vertex or edge of the pattern graph $G$. We will arrange the guards and inmates of the yard in such a way that collinearities of guards correspond to vertex-edge incidences in the pattern graph, and collinearities of inmates correspond to vertex-edge incidences in the host graph. In this way, every lineup of the yard will necessarily correspond to a solution to the partitioned subgraph isomorphism instance. We take some care in the construction in order to ensure that the resulting yard uses only points with integer coordinates of small magnitude.

The rows of the yard will be placed on horizontal lines, with positive integer $y$-coordinate $y_v$ or $y_{uv}$ for each vertex $v$ or edge $uv$ of the pattern graph. It will be convenient to be able to place arbitrary integer-coordinate points on two vertex rows and have the intersection of the line through these points with an edge row automatically be an integer-coordinate point. To ensure this property, we will always choose $y_{uv}=2y_v-y_u$ (where $y_v>y_u$). However, this requires that we choose the coordinates $y_u$ and $y_v$ carefully, so that no two vertices or edges have rows with equal $y$-coordinates. We make this choice greedily, choosing the coordinates $y_v$ for the vertices in an arbitrary order.
For each vertex $v$ in this order, we choose $y_v$ to be the smallest positive integer that obeys the following constraints:
\begin{itemize}
\item $y_v$ is unequal to any $y_u$ for an earlier vertex $u$
\item $y_v$ is unequal to any $y_{uw}$ for any two adjacent earlier vertices $u$ and $w$
\item $y_v$ does not make any $y_u=y_{vx}$ for any earlier vertex $u$ and earlier neighbor $x$ of $v$
\item $y_v$ does not make $y_{uw}=y_{vx}$ for any three earlier vertices $u$, $w$, or $x$ with $u$ adjacent to $w$ and $x$ to $v$.
\end{itemize}
These conditions are each determined by at most one earlier vertex $u$, at most one of three neighbors $w$ of $u$, and at most one of three neighbors $x$ of $v$. For each combination of these vertices, each condition causes one integer to be unavailable as a choice for $y_v$.
Therefore, $O(k)$ positive integers are unavailable. Because our greedy algorithm chooses the first available integer in each case, it chooses all coordinates $y_v$ and $y_{uv}$ to have magnitude $O(k)$.  Our actual row coordinates will be scaled and translated from these values, but both of these kinds of transformations preserve the property that two integer points on vertex rows determine an integer point on an edge row.

The first points we determine in the construction are the six points of the fence. We construct a complete quadrilateral with integer coordinates, and scale it by a sufficiently large factor so that there are as many integer-coordinate horizontal lines through the fence as we need rows (given the assignment of row coordinates to vertices and edges of the pattern graph)
and so that each row intersects the region inside the fence in a segment of length proportional to the diameter of the fence.
We scale these fences and rows a second time, by a polynomially-large factor $\phi$, to ensure that the following steps can be carried out using integer coordinates for the remaining points. We leave $\phi$ unspecified for now, and will later state constraints on how large it needs to be for the construction to work.

Next, we choose three intervals on the $x$-coordinate axis: a left interval that will contain the $x$-coordinates of all left guards, a middle interval that will contain the $x$-coordinates of all inmates, and a right interval that will contain the $x$-coordinates of all right guards.
The middle interval should be chosen so that its intersection with all row lines is inside the fence
(using the fact that there is a vertical line segment inside the fence that crosses all row lines)
and the left and right intervals should intersect all row lines outside the fence, on either side of it.
inside the fence and the left and right intervals should be outside it, on either side of the fence. These intervals should be sufficiently narrow that all lines through pairs of points on corresponding intervals of different rows pass in the correct way above or below the points of the fence and the intervals of other types.  In order to prevent an inmate from being collinear with two guards from other rows than its own, we also require that there be no line that crosses a row line in the left interval, a second row line in the middle interval, and a third row line in the right interval. In order to achieve this, it will be sufficient to choose the interval widths to be smaller than the diameter of the fence by a sufficiently large factor $\rho$. With this factor, we can choose the left and right guard intervals first, arbitrarily subject to the above constraints. Given this choice, let $C$ be the set of crossing points of the rows with non-horizontal lines through potential left and right guard points.
Then $C$ is a union of a cubic number of intervals (one for each triple of rows).
Each of these intervals defining $C$ is wider than the left and right intervals by a factor of $O(k)$, because the slopes of the lines through points within the left and right intervals on different rows are bounded away from zero by $\Omega(1/k)$.
Therefore, the projection of $C$ onto the $x$-axis has length $O(k^4/\rho)$ relative to the diameter of the fence. For $\rho=\rho_0 k^4$ for a sufficiently large constant factor $\rho_0$, a constant fraction of the length within the fence will
remain disjoint from $C$, and at least one interval within this constant fraction of the length will be long enough to use as the middle interval.

After making these choices, we place the two guards on each row, with the left guard having its $x$-coordinate within the left interval and the right guard within the right interval. We choose the positions of the \emph{vertex guards} (the guards on the rows that correspond to vertices of the pattern graph), one at a time. As each vertex guard is placed, it may also determine the location of one or more \emph{edge guards} (the  guards on the rows that correspond to edges of the pattern graph).
An edge guard's location is determined when the vertex guards for its two endpoints have been placed. Whenever this happens, we place the edge guard at the point where the line through the two vertex guards crosses the edge's row.
In order to ensure that the edge guards are placed within the left and right intervals,
we place each vertex guard within the middle third of its interval.
In this way, for each edge $uv$ of the pattern graph
we will produce two collinear triples of guards: the triple of left guards for $u$, $v$, and $uv$, and the triple of right guards for $u$, $v$, and $uv$. We will position the vertex guards in sufficiently general position so that these are the only collinearities between triples of guards. As with our choice of row coordinates, we achieve this absence of extra collinearities by a greedy algorithm, in which we place one vertex guard at a time, subject to the constraint that this choice should not create any undesired collinearities between the new vertex guard, the new edge guards whose position becomes determined by the new vertex guard, and the previously determined vertex and edge guards. Each potential undesired collinearity is determined by the choice of two previous guard positions (for which there are $O(k)$ choices) and possibly one of the three neighbors of the vertex whose guard we are trying to place. This implies that, when we are placing each vertex guard, there may be up to $O(k^2)$ positions that we should not place it, because placing it in one of those positions would cause a collinearity. Recall that, when constructing the fence, we left a scaling factor $\phi$ undetermined, and that the left and right interval each contain $\Omega(k\phi/\rho)$ integer coordinates, where $\rho=O(k^4)$ is the factor by which the intervals are narrower than the diameter of the fence and where the factor of $k$ comes from the earlier scaling used to make the fence big enough to include as many integer-coordinate rows as we need.
We will choose our scaling factor $\phi$ to be large enough (at least a sufficiently large multiple of $k^5$) to ensure that the middle thirds of the  left and right intervals contain more integer coordinates than the number of unavailable positions. With this choice of $\phi$, our greedy guard-placing strategy will always be able to determine a valid integer-coordinate placement for each vertex guard point. By our previous choice of the row coordinates, the edge guard points will also have integer coordinates.

Finally, we place the inmates on each row, within the middle interval. Each edge or vertex of the host graph corresponds to exactly one inmate point, which we place on the row for the corresponding edge or vertex of the pattern graph, given by the coloring of the host graph vertex or the coloring of the endpoints of the host graph edge. We choose the locations for these inmate points in exactly the same way as we chose them for the guard points, greedily one vertex at a time, with the vertex guards placed within the middle third of the middle interval. We place each edge inmate point
on the intersection between each edge row and the line through the inmate points for its two endpoints, as soon as these two vertex inmate points have been placed.
This will cause each triple of an edge and its two endpoints in the host graph to correspond to three collinear inmate points. We require in addition that no three inmate points on different rows are collinear if they do not form such a triple, and that no two points on the same row as each other coincide. Each potential undesired collinearity is determined by the choice of two previous guard positions (for which there are $O(n^2)$ choices, as the host graph may be dense) and possibly one of up to $O(n)$ neighbors of the vertex whose guard we are trying to place. This implies that there may be $O(n^5)$ positions in the row of the new vertex guard that would cause an unwanted collinearity or coincidence. We will choose our scaling factor $\phi$ to be large enough (at least a sufficiently large multiple of $k^3n^5$) to ensure that the middle third of the middle interval in which we place the inmate points contains more integer coordinates than this number of unavailable positions. In this way, we ensure the success of our greedy inmate-placing strategy.


\begin{lemma}
\label{lem:isyard}
The translation described above constructs a yard.
\end{lemma}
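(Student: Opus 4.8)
The plan is to verify the construction against each bullet-point clause in the definition of a yard, since every clause was engineered into a specific feature of the construction. I would first dispose of the clauses that hold essentially by fiat. Each vertex or edge of the pattern graph $G$ contributes one horizontal row carrying a left guard, a right guard, and at least one inmate (we may assume every relevant color class of the host graph is nonempty, as otherwise the instance is infeasible), so every row has at least three points with the correct guard/inmate labeling. The fence is built as a complete quadrilateral with integer coordinates, settling that clause directly; and because the middle interval meets every row inside the fence while the left and right intervals meet every row outside it, the fence encloses all inmates and no guards. The intended non-horizontal three-point lines --- the left- and right-guard triples for each edge of $G$, and the inmate triples for each incidence of a host edge with its endpoints --- are present because the edge guards and edge inmates were placed exactly on the crossings of the relevant lines with the edge rows.

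Next I would treat the clauses forbidding unwanted structure, all of which rest on the greedy ``avoid every collinearity except the intended ones'' placement. This immediately gives that the only non-horizontal collinear triples are the edge triples among guards and the host triples among inmates, so there is no non-horizontal four-point line within either class. The explicit rule that no line may meet one row in the left interval, a second in the middle interval, and a third in the right interval rules out an inmate collinear with two opposite-side guards, while the narrowness of the intervals (the factor $\rho=\Theta(k^4)$) pins the slope of any line through two points in corresponding intervals of distinct rows to a narrow range, forcing it to pass entirely above or entirely below every other interval and the fence. Together these two facts give the ``above/below'' clause and, with it, the clause that the only lines through three or more points mixing guards and inmates are the rows. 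Extending the greedy exclusion to the $O(1)$ lines spanned by pairs of fence points keeps all guards and inmates off the four fence lines, so no line through three or more points other than a fence line meets the fence.

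For the clause that no nine guards realize the right-hand pattern of \autoref{fig:forbidden}, I would use a short incidence argument rather than mere general position. In that pattern each of its two long diagonals is a collinear triple whose three points each lie on two further collinear triples. But every collinear triple of guards is an edge triple made of two vertex guards and one edge guard, and its edge guard lies on no other collinear triple; hence no collinear guard triple can consist of three points that each meet two further triples, so neither diagonal could be realized and the pattern cannot occur. The same observation that every collinear guard or inmate triple contains a point on no other such triple, combined with the interval separation preventing a single line from carrying both a guard and an inmate, shows that the fence is the only six points forming a complete quadrilateral.

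The main obstacle is the remaining pair of clauses relating the inmates to the fence. For the positive direction I must check that two inmates on a single row, together with the fence, always realize the \emph{middle} pattern: both inmates lie on the horizontal row line, and the fence was scaled and positioned so that every row meets its interior in a horizontal chord crossing the four fence lines in one fixed left-to-right order common to all rows, whence the order type of the six fence points and two points on such a chord is exactly a complete quadrilateral with two enclosed points. The crux is the negative direction. Because the middle interval is narrower than the fence by the large factor $\rho$ while distinct rows differ greatly in height, any line through two inmates on \emph{different} rows is steep, and a steep line crosses the fence in a combinatorially different order than a horizontal one, splitting the fence vertices from left to right rather than in the top-to-bottom fashion induced by a horizontal line. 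Establishing this crossing dichotomy is the heart of the argument; once it holds, the converse clause follows by combining it with the uniqueness of the fence as a complete quadrilateral and the observation that the two enclosed points of any middle-pattern instance lie inside the fence, hence are inmates, and so must lie on a common row. This completes the verification that the construction produces a yard.
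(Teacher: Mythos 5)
There is a genuine gap: your argument for the clause ``no complete quadrilateral other than the fence'' silently assumes that all four lines of a hypothetical quadrilateral are non-horizontal guard-only or inmate-only triples. Your supporting claim that ``the interval separation prevents a single line from carrying both a guard and an inmate'' is false for exactly the lines that matter --- the \emph{rows} carry both guards and inmates, and they are long lines (two guards plus up to $n$ inmates), so a complete quadrilateral could a priori use a row as one of its four lines, taking three of its six points from that row. Your edge-point counting does not exclude this case: in such a quadrilateral the three off-row points each lie on two non-horizontal triples, but each non-horizontal triple could have its unique edge point be its \emph{on-row} point, which lies on only that one triple plus the row, so no contradiction arises from incidence counting alone. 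The paper devotes the second half of its proof (and \autoref{fig:noquad}) to precisely this case, and the resolution is geometric rather than combinatorial: taking $a,b$ on the row, $c$ the point collinear with each of them, and $d,e$ the third points of lines $ac$ and $bc$, the two incidence triples $\{a,c,d\}$ and $\{b,c,e\}$ share two rows, and the third row of an incidence triple is determined by the other two (an edge row from two endpoint rows, or an endpoint row from a vertex row and an edge row); hence $d$ and $e$ lie on a common row, making the fourth line of the quadrilateral parallel to the row through $a$ and $b$, so the two lines have no crossing point and the quadrilateral cannot close. Without this parallelism argument your proof of the uniqueness-of-the-fence clause is incomplete.

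Two smaller points. First, in your treatment of the nine-guard pattern you assert that the pattern contains a collinear triple whose points ``each lie on two \emph{further} collinear triples''; in fact each point of the pattern's two vertical lines lies on exactly one further three-point line (its horizontal one). Your contradiction survives under the correct, weaker premise --- every non-horizontal guard triple contains an edge guard lying on no second triple, which already matches the paper's argument --- but as written the premise is false. Second, note that the same row-using case in principle threatens the nine-point pattern too; the paper dispenses with it for guards because each row holds only two guards, an observation worth making explicit. Your elaboration of the steep-line/fence-crossing dichotomy is sound but is material the paper treats as enforced directly by the interval-narrowness constraints of the construction rather than as part of this lemma's proof.
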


\begin{proof}
The construction explicitly ensures that most properties of a yard are true: there is a fence, and rows of points, each having two guards outside the fence and inmates inside the fence. There are no non-horizontal lines of four points, and all non-horizontal lines defined by pairs of guards or pairs of inmates pass above or below the other points and through the fence in the proper manner. There are no lines of three points that include both guards and inmates and are not rows. It remains to verify two properties, not stated explicitly as part of the construction: there must be no complete quadrilateral other than the fence, and no subset of guards that form the forbidden pattern on the right of \autoref{fig:forbidden}.

If either of these forbidden patterns existed, but did not include three points from the same row of the configuration, then all of its three-point lines would have to be non-horizontal lines, composed only of guards or only of inmates. But an edge point can only participate in one such line, and both forbidden patterns include at least one line all three points of which participate in two three-point lines. So all three of the points on this line would have to be vertex points, an impossibility.

\begin{figure}[t]
\centering\includegraphics[scale=0.4]{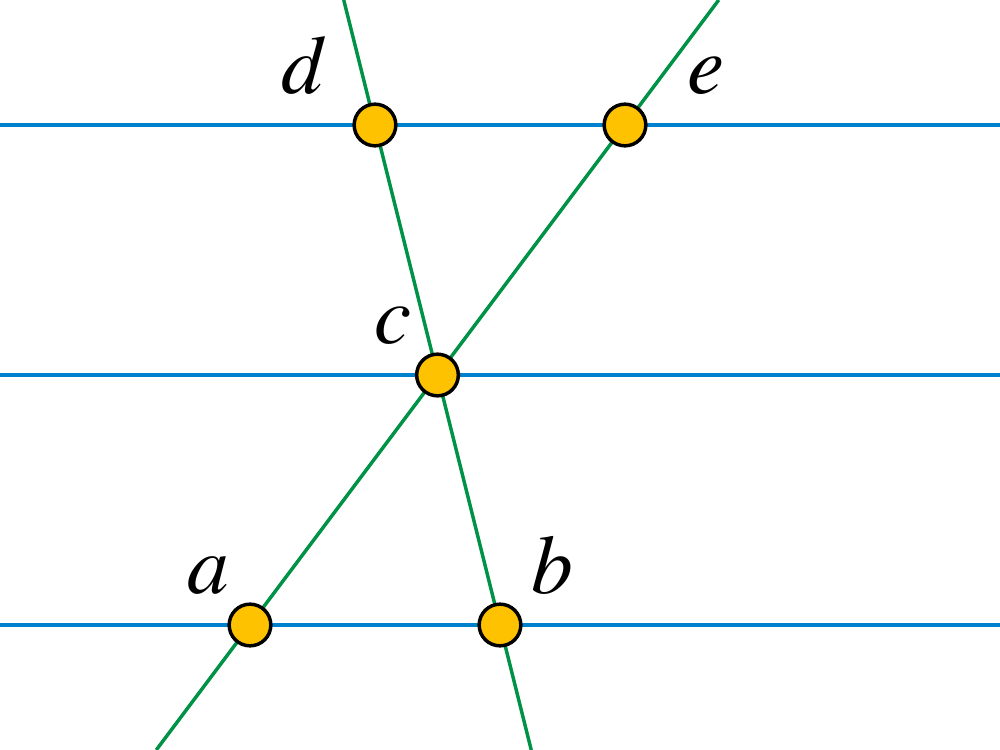}
\caption{Illustration for the proof of \autoref{lem:isyard}. The point set we construct from an instance of partitioned subgraph isomorphism cannot contain a complete quadrilateral using one row as a line $ab$, because this would force a second pair of points $de$ to belong to the same row as each other, and the parallel lines $ab$ and $de$ have no crossing point that would complete the quadrilateral.}
\label{fig:noquad}
\end{figure}

The only remaining possibility is a complete quadrilateral that uses at least one row as one of its lines. Let $a$ and $b$ be any two points of this row, $c$ be the unique point of the complete quadrilateral that belongs to three-point lines with both $a$ and $b$, $d$ be the third point of line $ac$, and $e$ be the third point of line $bc$. Then the rows containing these collinear triples of points must correspond in $G$ to an edge and its two endpoints. In particular, this implies that $d$ and $e$ are also on the same row as each other. But then lines $ab$ and $de$ are parallel, and cannot intersect to produce the sixth point of a complete quadrilateral (\autoref{fig:noquad}).
\end{proof}

\begin{lemma}
An instance $(H,G,c)$ of partitioned subgraph isomorphism (where $c$ is the coloring of the instance) is solvable if and only if the yard constructed from it admits a lineup.
\end{lemma}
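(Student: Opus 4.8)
The plan is to read off, from the combinatorial skeleton of a lineup provided by \autoref{lem:lineup}, the incidence data that the construction encodes. The key bookkeeping device is the \emph{type} of each selected point---left guard, right guard, or inmate---together with the fact (guaranteed by the yard properties and by the general position imposed on the guards) that every three-point line of a selected set is either a whole row, with one point of each type, or is monochromatic, in which case it is a collinear triple of left guards, of right guards, or of inmates. By construction the monochromatic guard triples are exactly the edge-incidence triples $\{u,v,uv\}$ of $G$, and the monochromatic inmate triples are exactly the incidence triples of $H$. For the forward direction I would take a lineup $S$; by \autoref{lem:lineup} it is the six fence points plus both guards and one inmate from each row, and its inmate on the row of a pattern vertex $v$ names a like-colored host vertex $w_v$. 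Since the only triples of rows with both their left and their right guards collinear are the $\{u,v,uv\}$, \autoref{lem:lineup} forces the inmates chosen on rows $u$, $v$, $uv$ to be collinear for every edge $uv$ of $G$; as the only collinear inmate triples are incidence triples of $H$, this says precisely that $w_uw_v\in E(H)$. Hence $v\mapsto w_v$ is a solution.

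For the converse I would build $S$ from a solution $v\mapsto w_v$ by taking the fence, both guards of every row, the inmate $w_v$ on each vertex row, and the inmate $w_uw_v$ on each edge row, and then check compliance. The four-point line is avoided because $S$ meets each row in three points and the yard has no non-horizontal four-point line; the complete-quadrilateral pattern is avoided because \autoref{lem:isyard} leaves the fence as the only complete quadrilateral and that pattern needs two inmates on a common row, whereas $S$ has one inmate per row. A connectivity argument disposes of the fence: the incidence structure of the nine-point pattern is connected, so if one of its points were a fence point then (since the only three-point lines through a fence point lie inside the fence) all of its lines, and hence all nine points, would be fence points---impossible, as there are only six.

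The main obstacle is excluding the nine-point broken-tic-tac-toe pattern $P$ in every position relative to the rows---exactly the ``$90^\circ$ rotation'' caveat recorded after \autoref{lem:lineup}. I would argue combinatorially. Labelling the points of $P$ as a $3\times 3$ array with rows $\{a,b,c\}$, $\{d,e,f\}$, $\{g,h,i\}$, its five collinear triples are these three rows and the two outer columns $\{a,d,g\}$, $\{c,f,i\}$, while the defining feature is that the middle column $\{b,e,h\}$ is \emph{not} collinear. These five lines share points according to the bipartite pattern $K_{3,2}$, so any two full-row lines of $P$ are disjoint and the full-row lines form an independent set; they therefore lie wholly among the three row-lines or wholly among the two column-lines. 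If none is a full row, all five are monochromatic, and then each point of the column-line $\{a,d,g\}$ lies on a second monochromatic line and so must be a vertex-row point---yet a monochromatic line is an incidence triple and must contain an edge-row point, a contradiction. In the remaining cases the monochromatic lines pin three selected rows down to an edge-incidence triple $\{u,v,uv\}$, and here the validity of the solution is decisive: for such a triple the left-guard, right-guard, and inmate triples are \emph{all} collinear---the guards by construction and the inmates because $w_uw_v\in E(H)$---which forces the three points meant to be the non-collinear middle of $P$ to be collinear after all. This contradiction is the crux of the argument, and it is precisely what upgrades the necessary condition of \autoref{lem:lineup} to a sufficient one for the yards produced by the reduction.
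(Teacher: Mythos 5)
Your proof follows essentially the same route as the paper's: the forward direction is the paper's argument routed through \autoref{lem:lineup} together with the fact that guard collinearities and inmate collinearities are exactly the incidence triples of $G$ and $H$ respectively, and the backward direction is the paper's case analysis over how the five three-point lines of the nine-point pattern can sit relative to the yard's rows. Your two organizational additions are sound and in fact make explicit things the paper leaves implicit: the $K_{3,2}$ intersection structure of the five lines cleanly justifies that the pattern lines realized as rows lie either all among the three ``horizontals'' or all among the two ``verticals'', and your connectivity argument eliminating fence points fills a step that the paper's proof (and that of \autoref{lem:isyard}) passes over silently.

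One step is stated too broadly, however. In the lumped ``remaining cases'' you assert that the monochromatic lines always pin three rows to an incidence triple $\{u,v,uv\}$ and that validity of the solution then forces the broken middle column collinear. This is accurate when the row-lines of the pattern are among its horizontals: there the guard-triple verticals do the pinning, the middle points are forced to be the three inmates of rows $u$, $v$, $uv$ (which needs the small observation, matching the paper's ``be rows themselves'' step, that a selected point lying on a line between a left guard and a right guard must lie on their common row and hence be its inmate), and validity, i.e.\ $w_uw_v\in E(H)$, is exactly what makes them collinear. But it misdescribes both vertical cases. When exactly one vertical is a yard row, no pinning to a common triple occurs; that case dies instead by your own earlier edge-point count, which applies verbatim: the other vertical's three points each lie on two monochromatic lines, hence are all vertex-row points, yet the vertical itself is a monochromatic incidence triple and must contain an edge-row point. (This is why the paper groups ``no line is a row'' and ``one vertical is a row'' into a single case.) And when both verticals are rows, the three middle points are collinear automatically---they are the three selected points of the single third row of the pinned triple---so validity of the solution plays no role there. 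Neither slip is fatal, since both cases are dispatched by arguments already present in your writeup, but as written the final claim is false for the one-vertical-row case and mis-attributes the source of the contradiction in the two-verticals case; a careful split into these subcases is needed to make the argument airtight.
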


\begin{proof}
If a lineup exists, it must include all fence points and guard points, and one inmate from each row.
This choice of the inmate for each row can be interpreted as selecting a vertex or edge from the host graph $H$,
of the appropriate color class (or pair of endpoint color classes), for each vertex or edge in the pattern graph $G$.
If the selected collection of vertices and edges in $H$ does not form a subgraph isomorphic to $G$, it will necessarily include two vertices $u$ and $v$ in $H$ that are adjacent in $G$ but for which the edge $uv$ does not exist in $H$, or has not been selected by the choice of inmate for its row. In this case, the guards for the rows of $u$, $v$, and $uv$ will be collinear in triples (because the corresponding vertices in $G$ are adjacent) but the inmates will not be collinear, creating an instance of the forbidden pattern on the right of \autoref{fig:forbidden}. So when $H$ does not contain a subgraph isomorphic to $G$, the yard admits no lineup.

If, on the other hand, $H$ does contain a subgraph isomorphic to $G$, we can choose the points corresponding to the vertices and edges of that subgraph as the inmates of a lineup. As the following case analysis shows, the resulting system of points does not include any forbidden pattern.
\begin{itemize}
\item It does not include four points in a line.
\item Because it has only one inmate on each row and the fence is the only complete quadrilateral in the yard, it does not include the middle forbidden pattern in \autoref{fig:forbidden}.
\item It does not include a copy of the right forbidden pattern in which none of its lines are rows, or in which only one of the vertical lines is a row, because in that case (as in \autoref{lem:isyard}) the other vertical line would be a line of three points, each of which belongs to two non-row lines of three points. This is an impossibility because the points on edge rows only belong to one non-row three-point line, and each non-row three-point line includes at least one of these edge points.
\item It does not include a copy of the right forbidden pattern in which only one or two of its horizontal lines is a row, because then the two vertical lines could only be lines through three guards. This would force the remaining horizontal lines to contain a left guard, a right guard, and a point between them, and therefore be rows themselves.
\item It does not include a copy of the right forbidden pattern with two rows as its two three-point vertical lines. If these rows represent two vertices $u$ and $v$ of $G$, the three points not on these lines would all belong to the row for edge $uv$ (as they each belong to a line containing a representative for $u$ and a representative for $v$). If the rows represent a vertex $u$ and edge $uv$ of $G$, the three points not on these lines would all belong to the row for vertex $v$. In either case these three points form another line, unlike the forbidden pattern.
\item The only remaining possibility is that the three horizontal lines of this pattern are rows. To form this pattern, the guards of these three rows must be collinear, meaning that they represent two adjacent vertices and the edge between them in $G$. But since we have selected the inmates on each row to form a correctly-colored subgraph isomorphic to $G$, the three inmates must also be collinear, preventing the forbidden pattern from occurring.
\end{itemize}
Thus, the selected points form a lineup of the yard, as required.
\end{proof}

By combining these results with the known hardness of partitioned subgraph isomorphism we obtain our main result:

\begin{theorem}
Finding a $k$-vertex compliant subset of a given $n$-point set is $\Wone$-complete and, under the exponential time hypothesis, cannot be solved in time $n^{o(k/\log k)}$.
\end{theorem}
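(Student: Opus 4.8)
The plan is to combine the geometric construction of the preceding pages with its correctness lemmas and the known hardness of \emph{cubic partitioned subgraph isomorphism}. Recall that this problem, on a host graph with $n$ edges and a cubic pattern graph $G$ with $k$ edges, is $\Wone$-complete and, under ETH, cannot be solved in time $n^{o(k/\log k)}$. Given such an instance $(H,G,c)$, the construction produces a point set, which by \autoref{lem:isyard} is a yard, and by the lemma establishing that $(H,G,c)$ is solvable exactly when the constructed yard admits a lineup, the instance is a yes-instance if and only if the point set contains a compliant subset of size $K = 3r+6$, where $r$ is the number of rows. Since $G$ is cubic with $k$ edges it has $2k/3$ vertices, so $r = |V(G)|+|E(G)| = 5k/3$ and $K = 5k+6 = \Theta(k)$; the parameter therefore grows only linearly.

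First I would confirm that the reduction is a genuine parameterized (indeed polynomial-time) reduction. The yard has $O(k)$ rows, six fence points, a constant number of guards per row, and one inmate per vertex or edge of $H$, for a total of $O(n)$ points, and the analysis of the construction already bounds all coordinate magnitudes and the number of greedy placement steps by fixed polynomials in $n$ and $k$ (through the scaling factor $\phi$). Because $K = \Theta(k)$, $\Wone$-hardness transfers at once. For the fine-grained bound, an algorithm for the compliant-subset problem running in the form $n^{o(k/\log k)}$ of the theorem --- that is, in time $(O(n))^{o(K/\log K)}$ on the $O(n)$-point instance --- would, using $K/\log K = \Theta(k/\log k)$, solve cubic partitioned subgraph isomorphism in time $n^{o(k/\log k)}$, contradicting the assumed lower bound.

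The one ingredient not supplied by the lemmas is membership in $\Wone$, and this is where I would be most careful. Finding a compliant $K$-subset means choosing $K$ of the input points so that no chosen sub-tuple realizes one of the three forbidden order types, each of at most nine points. I would cast this as weighted satisfiability of an antimonotone bounded-width CNF formula: take a variable $x_p$ for each point $p$ and, for every set $F$ of at most nine points whose order type matches a forbidden pattern, add the clause $\bigvee_{p\in F}\neg x_p$; then the weight-$K$ satisfying assignments are exactly the compliant $K$-subsets. There are only $O(n^9)$ candidate sets $F$, each tested in polynomial time, so the formula is built in polynomial time and every clause has at most nine literals. Weighted satisfiability of such formulas (equivalently, independent set in a hypergraph of bounded rank) lies in $\Wone$, which yields membership and hence completeness.

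I expect the substance of the encoding to be fully discharged by \autoref{lem:isyard} and the lineup lemma, so the two points that most warrant attention in writing the theorem are the linear parameter bookkeeping that carries the $n^{o(k/\log k)}$ bound intact through the reduction, and the separate, standard-but-easy-to-omit verification that the problem belongs to $\Wone$ at all.
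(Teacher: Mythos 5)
Your proposal is correct and follows essentially the same route as the paper, which obtains the theorem simply by combining \autoref{lem:isyard} and the lineup-equivalence lemma with the known $\Wone$-hardness and ETH lower bound for cubic partitioned subgraph isomorphism; your parameter bookkeeping ($K=5k+6$ for a cubic pattern graph with $k$ edges, so $K=\Theta(k)$) is exactly the calculation that makes this combination transfer the $n^{o(k/\log k)}$ bound. Your explicit $\Wone$-membership argument, encoding compliance as weighted antimonotone CNF satisfiability with clauses of arity at most nine, correctly supplies a step that the paper asserts (``$\Wone$-complete'') but leaves implicit.
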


\section{Collinearity and convex subsets}

In the introduction we claimed that finding a $k$-point subset with a given hereditary property is fixed-parameter tractable whenever the property is true for all collinear sets, or whenever there exists a convex polygon for which the property is not true. We split up the proof into three cases:
\begin{itemize}
\item Properties that are true for all collinear point sets and for all convex polygons.
\item Properties that are neither true for all collinear point sets nor for all convex polygons.
\item Properties that are true for all collinear point sets but not true for all convex polygons
\end{itemize}

\begin{figure}[t]
\centering
\includegraphics[scale=0.3]{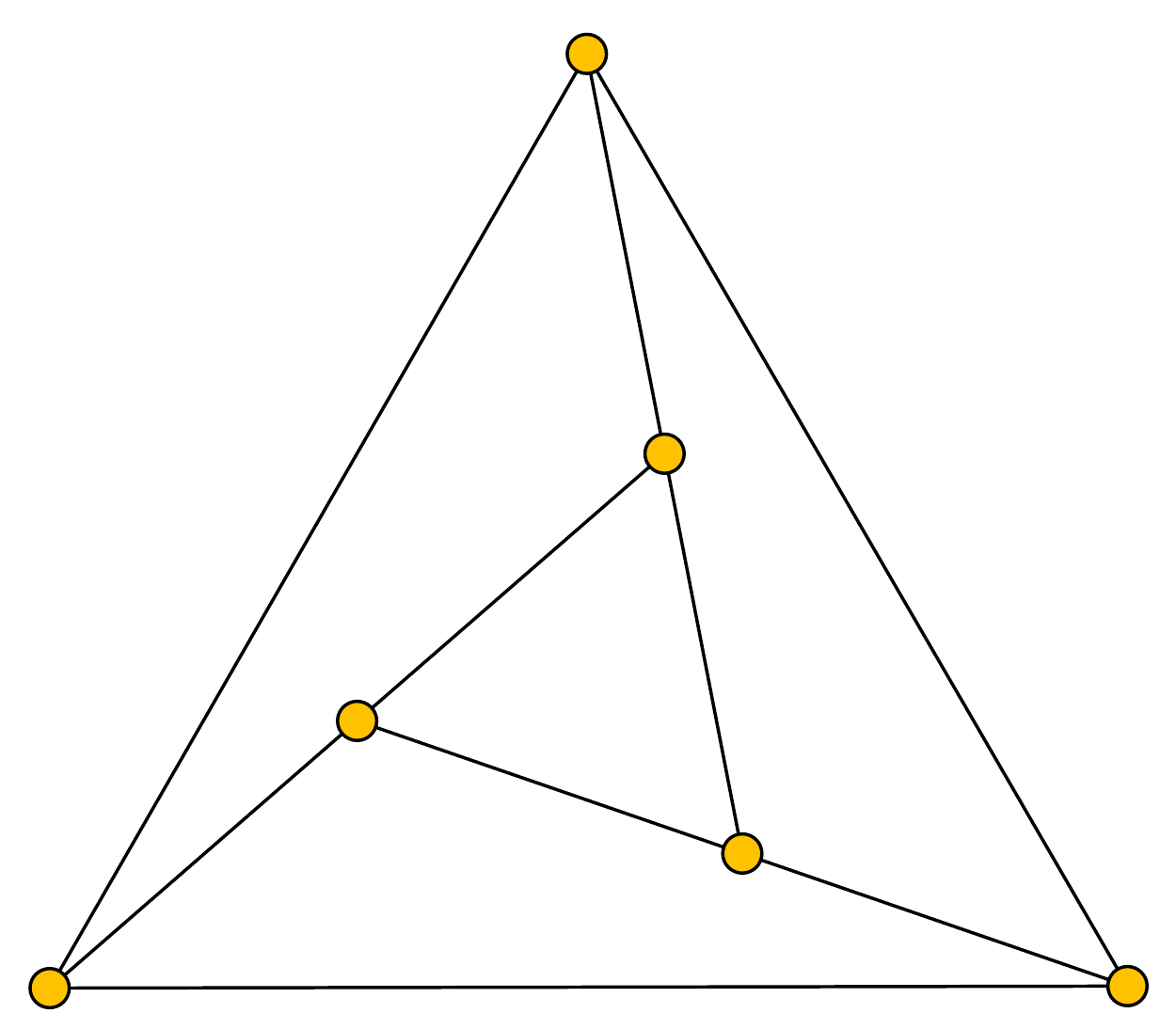}
\caption{The largest possible point set that has neither four collinear points nor four points in convex position. In general, every set of $n$ points contains $\Omega(\log n)$ points that are either collinear or in convex position. This point set forms an extreme case for that result.}
\label{fig:no4concol}
\end{figure}

However, the first two of these cases follow easily from known results:
\begin{itemize}
\item The first case covers properties that are true for all collinear subsets and for all convex polygons.
A variant of the happy ending theorem states that every set of $n$ points in the plane contains a subset of $\Omega(\log n)$ points that are either all collinear or all in convex position (\autoref{fig:no4concol}). For, if an $n$-point set has no large collinear subsets, it necessarily has a general-position subset of size nearly $\sqrt n$~\cite{PayWoo-SIDMA-13}, and applying the usual form of the happy ending theorem~\cite{ErdSze-CM-35,Suk-JAMS-17} to this subset gives a convex polygon of size $\Omega(\log n)$.
So, if property $\Pi$ is true for all collinear subsets and for all convex polygons (the first case),
then every sufficiently large point set has a $k$-point subset with property $\Pi$. Here, ``sufficiently large'' means large enough that this $\Omega(\log n)$ bound is at least $k$. For smaller values of $n$ we can search by brute force all subsets of the point set in time that is fixed-parameter tractable in $k$, because (for these values) $n$ itself is bounded by a function of $k$. And for larger values of $n$, we can either just answer yes (for a decision problem) or reduce the input arbitrarily to the smallest number of points that ensures the existence of a solution (a number bounded by a function of~$k$) and then perform a brute force search for a solution within the reduced subproblem.

\item The second case covers properties that are neither true for all collinear subsets nor for all convex polygons. Again, by the happy ending theorem, the only sets of points that can have such a property also have bounded size (independent of the parameter $k$). This follows because all sufficiently large sets
contain a collinear or convex subset without the property, and (because the property is hereditary) do not have the property themselves. In particular, if $k$ itself is sufficiently large in this sense, no solution can exist.
To test whether there exists a $k$-point subset with such a property, we compare $k$ to this bound, and answer no immediately when $k$ is too large. If $k$ is not too large, we perform a brute-force search over all $\tbinom{n}{k}$ subsets of the input of size $k$. Because we only perform this search when $k$ is bounded by a constant, the running time for this brute force search can be bounded by a polynomial with constant exponent.
\end{itemize}

We now detail a fixed-point-tractable solution for the remaining case.

\begin{lemma}
\label{lem:coll-and-noncon}
Let decidable hereditary property $\Pi$ be true for all collinear point sets but not true for all convex polygons.
Then it is fixed-parameter tractable to find a $k$-point subset having property $\Pi$, among a given set of points in the plane.
\end{lemma}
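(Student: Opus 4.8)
The plan is to combine the freedom granted by collinearity with a strong structural restriction forced by the excluded convex polygon. Since $\Pi$ fails for some convex polygon, and all convex polygons on the same number of vertices share a single order type, heredity yields a constant $h$ (depending only on $\Pi$) such that no $\Pi$-set contains $h$ points in convex position. By the Erd\H{o}s--Szekeres theorem, every point set in general position of size at least $m:=\binom{2h-4}{h-2}+1$ contains $h$ points in convex position; hence every general-position subset of a $\Pi$-set has fewer than $m$ points, where $m$ is a constant determined by $\Pi$.

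First I would dispose of the dense case: if some line of the input carries at least $k$ points, I output $k$ of them, which form a valid solution because every collinear set has $\Pi$. This check takes polynomial time, and in the remaining case every line of the input contains fewer than $k$ points. The key structural step is to observe that, as a consequence of the Erd\H{o}s--Szekeres bound above, every $\Pi$-set lies on a constant number of lines. Let $S$ be a solution and let $G\subseteq S$ be a \emph{maximal} subset in general position; then $|G|<m$. By maximality, every point of $S\setminus G$ is collinear with two points of $G$, so $S$ is contained in the union of the at most $\binom{m}{2}$ lines spanned by pairs of $G$.

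The algorithm therefore guesses $G$ by trying all subsets of the input of size less than $m$; there are $n^{O(m)}$ such guesses, a polynomial number since $m$ depends only on $\Pi$. For each guess it forms the set $Q$ of all input points lying on the $\binom{|G|}{2}$ lines spanned by $G$. Because each of these lines carries fewer than $k$ input points (by the dense-case reduction), $|Q|=O(k)$. It then tests whether $Q$ contains a $k$-point $\Pi$-set by brute force over all $\binom{|Q|}{k}=2^{O(k)}$ subsets of size $k$, checking $\Pi$ on each; since the true $G$ guarantees $S\subseteq Q$, some guess succeeds exactly when a solution exists. The total running time is $2^{O(k)}\,n^{O(1)}$, which is fixed-parameter tractable.

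I expect the main obstacle to be the structural step rather than the bookkeeping: the crucial realization is that excluding a single convex polygon pins every solution onto $O(1)$ lines through a bounded general-position core, and that this restriction interacts with the collinearity reduction so that the set $Q$ of relevant candidate points shrinks to $O(k)$. Once both bounds are in place, a guess-and-brute-force strategy suffices. The only point needing care is that testing $\Pi$ on a candidate set is allowed to cost time depending on $k$ (using only decidability of $\Pi$), which is absorbed into the fixed-parameter bound.
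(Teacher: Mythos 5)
Your proposal is correct and takes essentially the same approach as the paper's proof: handle the dense-line case first, bound a maximal general-position subset $G$ of any solution by an Erd\H{o}s--Szekeres-type constant forced by the excluded convex polygon, cover the solution by the lines through pairs of $G$, and finish by a polynomial guess (you enumerate the core $G$ directly, the paper equivalently guesses the $\tbinom{r}{2}$-tuple of covering lines) followed by brute force over the at most $(k-1)\tbinom{m}{2}=O(k)$ covered points. The only cosmetic difference is that you invoke the classical Erd\H{o}s--Szekeres bound where the paper cites Suk's sharper one, which is immaterial since only the constancy of the bound is used.
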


\begin{proof}
It is straightforward, in polynomial time, to find all lines through pairs of input points and determine whether any of these lines contains $k$ or more points. If so, we can immediately return a subset of $k$ collinear points. So for the remainder of the proof we assume without loss of generality that each line contains at most $k-1$ input points.

Let $q$ be the smallest number of vertices in a convex polygon that does not have property $\Pi$. By assumption, $q$ exists, and by the definition of hereditary properties, any set of points containing the vertices of a convex $q$-gon does not have property $\Pi$. Let $r$ be the largest number of points in a set of points in general position (no three in a line) that does not include a convex $q$-gon; by Suk's improvement to the happy ending theorem~\cite{Suk-JAMS-17}, $r\le 2^{q+o(q)}$.
Let $P$ be the unknown $k$-point subset (assuming it exists), and $G$ be a maximal subset of $P$ that is in general position (meaning that no other point from $P$ can be added to $G$ without creating a three-point line).
Then $|G|\le r$, or else $G$ and $P$ would contain a convex $q$-gon and $P$ would not have property $\Pi$. The pairs of points in $G$ determine $\tbinom{|G|}{2}\le\tbinom{r}{2}$ lines,
and these lines together must cover all of $P$, or else $G$ would not be maximal (\autoref{fig:maxgen}).

\begin{figure}
\centering\includegraphics[scale=0.3]{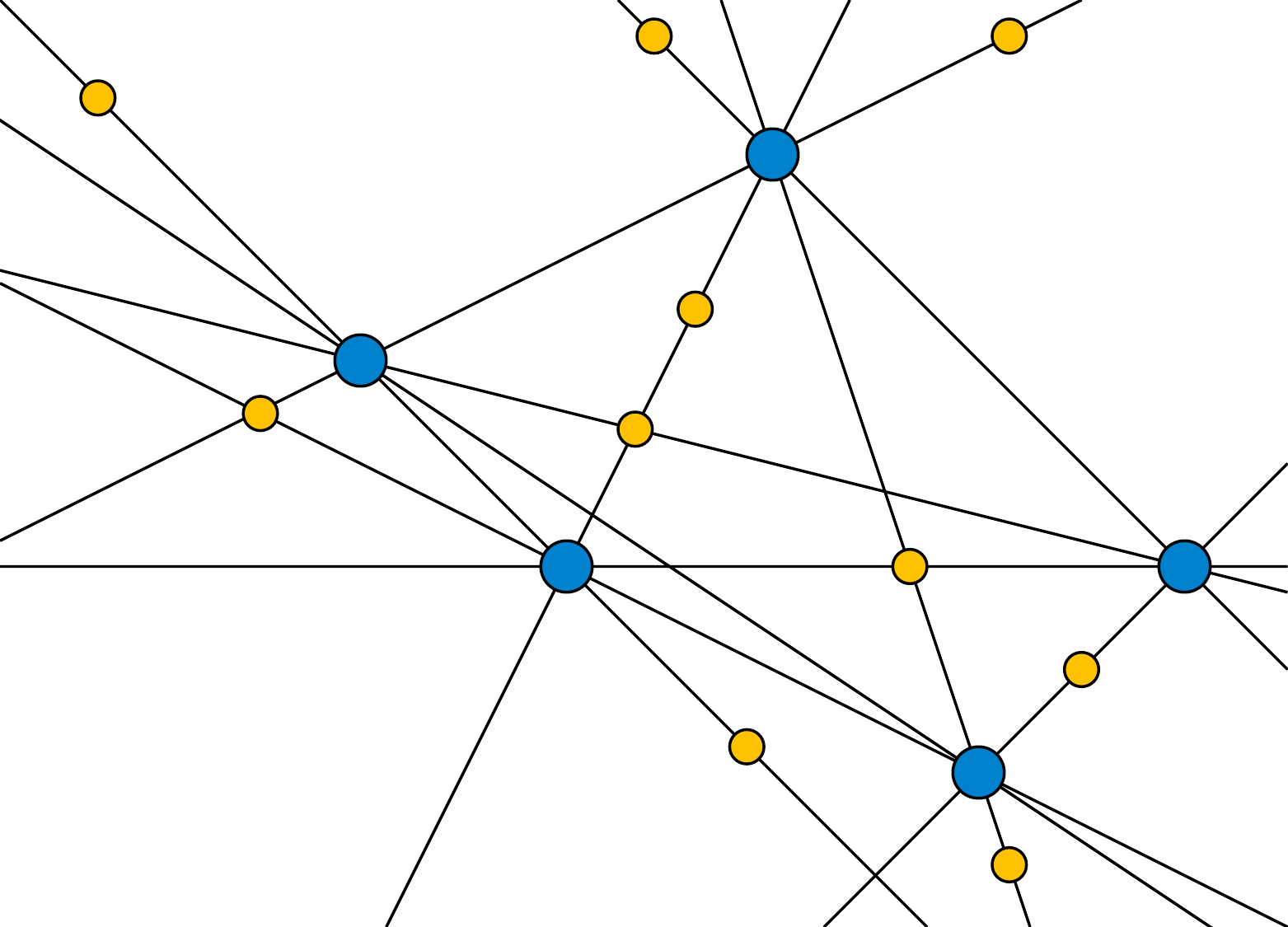}
\caption{If $G$ (blue) is a maximal general-position subset of $P$, then the remaining points of $P$ (yellow) must be covered by the lines through pairs of points in $G$, or else they could be added to $G$ to produce a larger general-position subset.}
\label{fig:maxgen}
\end{figure}

Therefore, we can solve the problem of finding a $k$-point subset $P$ with property $\Pi$ by guessing an $\tbinom{r}{2}$-tuple of lines that cover $P$, among all $\tbinom{n}{2}$ lines through pairs of input points, and then for each tuple of lines performing a brute-force search among the points covered by those lines. Because $r$ is a constant (depending on $\Pi$ but not on the input parameter $k$), there are a polynomial number of tuples of lines to be searched. Each tuple covers at most  $(k-1)\tbinom{r}{2}$ points, so
performing each brute-force search takes time bounded by a function of $k$, independent of the input size. Therefore, this algorithm is fixed-parameter tractable.
\end{proof}

This completes the last case of the following result:

\begin{theorem}
\label{thm:coll-or-noncon}
Let decidable hereditary property $\Pi$ either be true for all collinear point sets or not be true for at least one convex polygon (or both). Then it is fixed-parameter tractable to find a $k$-point subset having property $\Pi$, among a given set of points in the plane.
\end{theorem}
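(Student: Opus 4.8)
The plan is to prove the theorem by a clean case analysis on two Boolean flags: whether $\Pi$ holds for every collinear point set (call this condition A) and whether $\Pi$ holds for every convex polygon (call this condition B). The first thing I would do is observe that the hypothesis of the theorem, namely ``A or not-B'', decomposes exactly into the three mutually exclusive and jointly exhaustive cases already isolated in the preceding discussion: (A and B), (not-A and not-B), and (A and not-B). Conditioning on B makes this immediate, since the hypothesis reduces to A when B holds and is vacuously true when B fails, so no other combination of the flags can satisfy it. In particular the fourth logical possibility, not-A together with B, is precisely the regime excluded by the hypothesis; this is consistent with the $\Wone$-completeness of the compliant property, which forbids four collinear points yet keeps every convex polygon and therefore falls into exactly that excluded fourth case.

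For the first two cases I would appeal to the happy-ending-type dichotomy invoked earlier. In the case A-and-B, every sufficiently large input contains a logarithmically large subset that is either collinear or in convex position, by combining the Payne--Wood bound on general-position subsets with Suk's form of the Erd{\H{o}}s--Szekeres theorem; since $\Pi$ holds for both kinds of subset, any input whose size exceeds some threshold $f(k)$ already contains a $k$-point solution, while inputs below the threshold are handled by brute-force enumeration of $k$-subsets. In the complementary case not-A-and-not-B, the same dichotomy forces every set on which $\Pi$ holds to have bounded size, so a solution of size $k$ can exist only when $k$ lies below a fixed constant; one compares $k$ against this constant and, when it passes, exhaustively searches all $\tbinom{n}{k}$ subsets with a polynomial whose exponent is that constant. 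Both arguments are spelled out in the text preceding the theorem, so here I would simply record that they discharge these two cases.

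The remaining case, A-and-not-B, is where the real work lies, and I would dispatch it by citing \autoref{lem:coll-and-noncon} directly. The heart of that lemma, and hence the main obstacle for the theorem, is the covering argument: after first checking in polynomial time whether some line already carries $k$ collinear points (which would immediately yield a collinear solution, since A holds), one fixes a maximal general-position subset $G$ of the hypothetical solution $P$, bounds $|G|\le r$ with $r\le 2^{q+o(q)}$ by applying Suk's theorem to the smallest forbidden convex $q$-gon, and observes that maximality forces the $\tbinom{r}{2}$ lines spanned by $G$ to cover all of $P$. Guessing this constant-size tuple of lines among the $\tbinom{n}{2}$ candidate lines and brute-forcing within the at most $(k-1)\tbinom{r}{2}$ covered points then delivers the FPT bound. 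Assembling the three cases completes the proof; the only genuinely delicate step is the constant bound on $|G|$, everything else being either bookkeeping or an appeal to the stated extremal results.
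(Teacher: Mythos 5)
Your proposal is correct and follows essentially the same route as the paper: the hypothesis ``true for all collinear sets or false for some convex polygon'' is split into the same three cases, the first two discharged by the happy-ending-type dichotomy (Payne--Wood plus Suk) exactly as in the paper's preceding discussion, and the third by the covering argument of \autoref{lem:coll-and-noncon}. Your added observation that the excluded fourth case (false for some collinear set, true for all convex polygons) is precisely where the hard compliant property lives is accurate and consistent with the paper's hardness result.
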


Not every hereditary property is decidable: there are uncountably many hereditary properties,\footnote{This follows from the fact that there are infinite families of point sets in which no family member has the same order type as a subset of another family member. For example, the set of vertices and edge midpoints of a convex $r$-gon is not a subset of the set of vertices and edge midpoints of any other convex polygon, so choosing one such point set for each $r$ gives a family of point sets none of which have the order type of a subset of another. This family has uncountably many subfamilies, each of which forms the set of forbidden patterns for a hereditary property. See \cite[Observation 5.9]{Epp-18}.} only countably many of which are decidable.  When a hereditary property is not decidable, the same technique applies, but (because of the need to test the property within each brute-force search used as a subroutine of the algorithms) it gives us a non-uniform fixed-parameter tractable algorithm.

\section{Properties with a single obstacle}

If $X$ is any fixed point set, we can define hereditary property $\Pi_X$ to be true for the point sets that do not contain $X$, and false otherwise. We are then interested in testing whether a given input includes $k$ points with property $\Pi_X$. That is, are there at least $k$ points in the largest point set that avoids $X$? If $X$ is not itself collinear, then all collinear point sets have property $\Pi_X$.
Therefore, there is only one family of possible choices for $X$ that is not already covered by \autoref{thm:coll-or-noncon}: the family of sets $X$ consisting of $q$ points on a single line. The order type of any such point set depends only on~$q$. We call a point set with this order type $L_q$.

\begin{lemma}
Let $q$ be fixed. Then it is fixed-parameter tractable to find a $k$-point subset avoiding $L_q$, among a given set of points in the plane.
\end{lemma}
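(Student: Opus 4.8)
The plan is to reuse the maximal-general-position-set dichotomy behind the proof of \autoref{lem:coll-and-noncon}, but to supplement it with a kernelization step, since here (unlike in that lemma) a line carrying many collinear points no longer immediately yields a solution. We may assume $q\ge 3$, as smaller $q$ is trivial, so that any point set in general position (no three collinear) automatically avoids $L_q$. First I would greedily compute a maximal subset $G$ of the input that is in general position. If $|G|\ge k$, then any $k$ of its points are in general position and hence avoid $L_q$, and we answer yes. Otherwise $|G|\le k-1$, and by maximality every input point lies on one of the $\binom{|G|}{2}\le\binom{k-1}{2}$ lines spanned by pairs of points of $G$; call this collection of $\ell=O(k^2)$ lines $\mathcal L$.

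In this second case every subset avoiding $L_q$ meets each line of $\mathcal L$ in at most $q-1$ points, and the crux is to bound how many points we must keep on each such line. I would assign each input point to a single line of $\mathcal L$ through it (breaking ties arbitrarily) and retain, on each line, an arbitrary set $K_i$ of at most $B=\binom k2+k+q$ of the points assigned to it, discarding the rest. The claim is that this preserves the answer: given any solution $P$, I would rewrite it one line at a time. When processing a line $L_i$, the points of $P$ assigned to $L_i$ form a set $S_i$ of at most $q-1$ points, all on $L_i$; I would replace $S_i$ by an equally large set $S_i'\subseteq K_i$ of points assigned to $L_i$ that are \emph{safe} relative to the remaining points $F$ of the current solution. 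A candidate point on $L_i$ is unsafe only if it coincides with a point of $F$ or lies on one of the transversal lines $M\ne L_i$ carrying at least $q-1$ points of $F$; since $|F|\le k$ there are at most $\binom k2$ such transversals, each meeting $L_i$ in a single point, so at most $\binom k2+k$ points of $L_i$ are unsafe in total. Because $|K_i|=B$ exceeds this by $q$, the set $K_i$ always contains at least $q$ safe points, enough to form $S_i'$; and because any two points of $L_i$ already determine $L_i$, each transversal line meets $S_i'$ in at most one point, so the swap creates no new $q$-point collinearity. Iterating over all lines of $\mathcal L$ turns $P$ into an equivalent solution contained in the kernel $K=\bigcup_i K_i$.

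Since $|K|\le \ell B = O(k^4)$ is bounded by a function of $k$ (for fixed $q$), I would finish with a brute-force search over all $\binom{|K|}{k}$ subsets of the kernel, testing each for the presence of $q$ collinear points; this runs in $2^{O(k\log k)}$ time up to a polynomial factor and is therefore fixed-parameter tractable. I expect the main obstacle to be the kernelization of the second case: verifying that keeping only $O(k^2)$ representative points per line is sufficient, and in particular that the per-line swapping argument never introduces a new long collinearity among the retained points. The bookkeeping for points lying on several lines of $\mathcal L$ (handled above by fixing one owner line per point) is the most delicate part, and must be arranged so that swaps performed on different lines do not interfere.
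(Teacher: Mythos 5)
Your proposal is correct and takes essentially the same route as the paper's proof: a maximal general-position subset $G$ either directly yields a $k$-point solution or its $\binom{|G|}{2}=O(k^2)$ spanned lines cover the input, after which one kernelizes by retaining $O(k^2)$ points per line---justified by a replacement argument showing that at most $\binom{k}{2}+O(k)$ candidate positions on a line can be blocked by the rest of a solution---and then brute-forces the resulting $O(k^4)$ kernel. Your owner-line assignment and batch swap (with budget $\binom{k}{2}+k+q$) are only minor implementation variants of the paper's sequential truncation of each line to $\binom{k-1}{2}+q-1$ points with a one-point-at-a-time replacement argument.
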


\begin{proof}
We assume $q>2$ for otherwise the problem is trivial (an $L_2$-avoiding set can only be a single point, and an $L_1$-avoiding set must be empty).
We begin by finding a maximal subset $G$ of the input that is in general position (no three in a line).
If $|G|\ge k$, we return any $k$ points from $G$ as our $k$-point $L_q$-avoiding set.
Otherwise, as in \autoref{lem:coll-and-noncon} and \autoref{fig:maxgen}, the input can be covered by
at most $\tbinom{k-1}{2}$ lines, the lines through pairs of points in $G$.

Observe that, on each of these lines, an $L_q$-avoiding set can include at most $q-1$ points from the line. Consider a single line $\ell$ and a process in which we build up a $k$-point $L_q$-avoiding set, one point at a time, starting from a subset of points that are disjoint from $\ell$ and then adding points from $\ell$ in an arbitrary order. At each step of this process, at most $q-2$ points from $\ell$ have already been chosen, and the next point must be chosen to be disjoint from these already-chosen points and from any of the lines through pairs of points (neither on $\ell$) that already contain $q-1$ points. There at most $\tbinom{k-1}{2}$ of these lines through pairs of already-chosen points (because we have already chosen at most $k-1$ points), and each can prevent only a single point of $\ell$ from being chosen. So, as long as $\ell$ contains $\tbinom{k-1}{2}+q-1$ points, there will always remain at least one point that is free to be added until we have included exactly $q-1$ points from $\ell$. Therefore, if we reduce the input to contain any arbitrary subset of $\tbinom{k-1}{2}+q-1$ points on $\ell$, keeping the points disjoint from $\ell$ unchanged, we cannot affect the existence or nonexistence of a $k$-point $L_q$-avoiding subset.

By applying this observation repeatedly we can kernelize the problem. That is, we find a polynomial-time algorithm that reduces any parameterized instance to an equivalent instance (the ``kernel'') whose size is bounded by a function of the parameter. To do so, we consider the lines through pairs of points of $G$, one at a time. Whenever any such line contains more than $\tbinom{k-1}{2}+q-1$ points of the remaining input, we delete arbitrarily chosen points until it contains exactly $\tbinom{k-1}{2}+q-1$ points. At the end of this process, the total number of remaining points is at most
\[
\binom{k-1}{2}\left(\binom{k-1}{2}+q-1\right)=O(k^4),
\]
a bound depending only on $k$. We may then search for a $k$-point $L_q$-avoiding subset by a brute force search on the points of this kernel.
\end{proof}

This includes as a special case the problem of finding an $L_3$-avoiding subset (that is, a subset in general position), the central computational task for the no-3-in-line problem. (We already proved this special case to be NP-hard in \cite[Theorem 9.3]{Epp-18} and fixed-parameter tractable in \cite[Theorem 9.5]{Epp-18}.)
It completes the last case of the following result:

\begin{theorem}
Let $X$ be any fixed finite set of points. Then it is fixed-parameter tractable to find a $k$-point subset avoiding $X$, among a given set of points in the plane.
\end{theorem}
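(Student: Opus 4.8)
The plan is to reduce the statement to the two results already established, by a case analysis on whether the fixed obstacle $X$ is collinear. The guiding observation, already noted before the preceding lemma, is that the collinear obstacles form the only family of choices for $X$ not covered by \autoref{thm:coll-or-noncon}.

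First I would dispatch the case where $X$ is not collinear. Because every subset of a collinear point set is again collinear, no collinear set can contain a subset whose order type equals that of the non-collinear $X$. Hence property $\Pi_X$ holds for all collinear point sets, and \autoref{thm:coll-or-noncon} directly gives a fixed-parameter tractable algorithm for finding a $k$-point subset avoiding $X$. In the remaining case $X$ is collinear, and here I would use that the order type of $q$ collinear points depends only on $q$: every collinear obstacle has order type $L_q$ for $q = |X|$, so avoiding $X$ is the same as avoiding $L_q$, which is fixed-parameter tractable by the preceding lemma.

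Since the two cases are exhaustive and each admits a fixed-parameter tractable algorithm, the theorem follows. I do not expect any genuine obstacle here: the substantive algorithmic content lives in \autoref{thm:coll-or-noncon} and in the $L_q$ kernelization, and this final step is purely the combination of those two. The single point requiring care is the classification of collinear order types by cardinality alone, which is what lets the collinear case be reduced to the one-parameter family $L_q$.
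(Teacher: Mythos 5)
Your proposal is correct and matches the paper's own argument exactly: the paper likewise splits on whether $X$ is collinear, noting that a non-collinear $X$ makes $\Pi_X$ true for all collinear sets (so \autoref{thm:coll-or-noncon} applies), and that a collinear $X$ has order type $L_q$ with $q=|X|$, handled by the preceding kernelization lemma. The only point worth a passing remark is that invoking \autoref{thm:coll-or-noncon} uses decidability of $\Pi_X$, which is immediate since $X$ is a fixed finite pattern.
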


\section{Conclusions}

We have identified a hereditary property of point sets such that finding a $k$-point subset with this property that is hard to solve by fixed-parameter tractable algorithms, and identified several general classes of properties for which the corresponding problems have fixed-parameter tractable algorithms. These include properties that include all collinear sets, properties that exclude at least one convex polygon, and properties definable by a single forbidden pattern. There are several additional cases that are fixed-parameter tractable:
\begin{itemize}
\item The largest convex subset (avoiding two forbidden patterns, one of three points on a line and the other of a triangle with an interior point) is polynomially solvable, in cubic time and linear space~\cite{ChvKli-SEC-80,EdeGui-JCSS-89}.
\item If finding $k$-point subsets is fixed-parameter tractable for two properties $\Pi$ and $\Pi'$, then it is also fixed-parameter tractable for their disjunction (the point sets that have at least one of the two properties), by running both algorithms and returning any $k$-point set found by either of them.
\item If finding $k$-point subsets is fixed-parameter tractable for property $\Pi$, then it is also fixed-parameter tractable for the property $\Pi_q$ of being partitionable into $q$ subsets that have property $\Pi$, by color-coding~\cite{AloYusZwi-JACM-95}. Alternatively one could use a different hereditary property for each color class.
\end{itemize}
We have been unable to identify any pattern in the remaining cases that would enable us to distinguish them from our hard $k$-point subset problem and complete a classification of which of these problems are fixed-parameter tractable and which are not. We leave this as open for future research.

Another open problem concerns properties (such as being in convex position) that can be true only of point sets in general position. That is, these properties have a three-point line as one of their forbidden patterns. Can it be hard, for such a property, to find a $k$-point subset with the property? We suspect that the answer is yes, by a reduction that mimics the one we have given, using more complex constraints on general-position subsets of points to make them act like the collinear subsets of points in our reduction. However, we have not worked out the details of such a reduction.

\raggedright
\bibliographystyle{plainurl}
\bibliography{patterns}
\end{document}